\newtheorem{lemma}{Lemma}
\begin{document}

\vspace{10mm}
\begin{center}
	\large{\textbf{On a criterion for a cutoff regularization}\\ 
		\textbf{in the coordinate representation}}
\end{center}
\vspace{2mm}
\begin{center}
	\large{\textbf{Aleksandr V. Ivanov}}
\end{center}
\begin{center}
	St. Petersburg Department of Steklov Mathematical Institute of Russian Academy of Sciences,\\ 
	27 Fontanka, St. Petersburg 191023, Russia
\end{center}
\begin{center}
	Leonhard Euler International Mathematical Institute in Saint Petersburg,\\ 
	10 Pesochnaya nab., St. Petersburg 197022, Russia
\end{center}
\begin{center}
	E-mail: regul1@mail.ru
\end{center}
\vspace{2mm}
\begin{flushright}
	\large{\textbf{\textit{To the 90-th anniversary of L.D.Faddeev}}}
\end{flushright}
\vspace{10mm}

\textbf{Abstract.} The paper discusses an applicability criterion for a cutoff regularization in the coordinate representation in the Euclidean space with a dimension larger than two. It is shown that the set of functions satisfying the criterion is not empty. As an example, an explicit function is presented. It is proved by explicit construction that there are functions satisfying the criterion in a stronger formulation.

\vspace{2mm}
\textbf{Key words and phrases:} cutoff regularization, Green's function, fundamental solution, deformation, coordinate representation.

\newpage

\section{Introduction}
\label{31:sec:int}
Divergent integrals arise in various models of quantum field theory \cite{10,7}. As a rule, this is due to the fact that generalized functions \cite{Gelfand-1964}, which should be considered on a certain test class, act on other generalized functions. This leads to the appearance of non-integrable densities. To work with such objects, additional regularization is necessary, the choice of which depends on the symmetries of the model and significantly affects the process of further research.

In \cite{34}, a cutoff regularization in the coordinate representation was proposed. Later, it was improved \cite{Ivanov-2022} and successfully applied to a number of models \cite{Ivanov-Kharuk-2020,Ivanov-Kharuk-20222,Ivanov-Kharuk-2023,Iv-2024-1}. In this paper, we formulate an applicability criterion for such regularization, prove its feasibility, and give some specific examples.

Let $n\in\mathbb{N}$ and $n>2$. Consider the Euclidean space $\mathbb{R}^n$ with the standard scalar product of vectors $(\,\cdot\,,\cdot\,)$. Let us define the Laplace operator $A_n(x)$ and a fundamental solution $G_n(x)$, which in Cartesian coordinates have the form
\begin{equation*}\label{31-1}
A_n(x)=-\sum_{k=1}^n\partial_{x_k}^2,\,\,\,
G_n(x)=\frac{|x|^{2-n}}{(n-2)S_{n-1}},\,\,\,
S_{n-1}=\frac{2\pi^{n/2}}{\Gamma(n/2)}.
\end{equation*}
It is $G_n(\cdot)$ that is usually used as the main approximation of Green's functions, and therefore the study of its properties is an important task. It is clear that the last objects solve the equation $A_n(x)G_n(x-y)=\delta(x-y)$ in the sense of generalized functions on the Schwartz class $S(\mathbb{R}^n)$.

The regularization mentioned above consists in a deformation of the following form
\begin{align}\label{31-2}
G_n^{\phantom{1}}(x)\xrightarrow{\mbox{\footnotesize{reg.}}}
G_n^{\Lambda,\mathbf{f}}(x)=&\,\frac{\Lambda^{n-2}}{(n-2)S_{n-1}}\mathbf{f}\big(|x|^2\Lambda^2\big)+
\frac{1}{(n-2)S_{n-1}}
  \begin{cases} 
\,\,\Lambda^{n-2}, & \mbox{if}\,\,\,|x|\leqslant1/\Lambda;\\
|x|^{2-n}, &\mbox{if}\,\,\,|x|>1/\Lambda,
\end{cases}\\\nonumber
=&\,\frac{\Lambda^{n-2}}{(n-2)S_{n-1}}\mathbf{f}\big(|x|^2\Lambda^2\big)+
G_n^{\Lambda,\mathbf{0}}(x),
\end{align}
where $\Lambda$ is a regularizing parameter, and $\mathbf{f}(\cdot)\in C\big([0,+\infty),\mathbb{R}\big)$ is an auxiliary deforming function satisfying the properties
\begin{equation*}\label{31-3}
	\mathrm{supp}\big(\mathbf{f}(\cdot)\big)\subset[0,1]\,\,\,\mbox{and}\,\,\,
	A_n(x)\Lambda^{n-2}\mathbf{f}\big(|x-y|^2\Lambda^2\big)
	\xrightarrow{\Lambda\to+\infty}0.
\end{equation*}
For this type of deformation, the transition  $A_n^{\phantom{1}}(x)G_n^{\Lambda,\mathbf{f}}(x-y)\to0$ is valid when removing the regularization $\Lambda\to+\infty$  in the sense of generalized functions on $S(\mathbb{R}^n)$. It follows from the construction that the deformation is performed only in the closed ball  $\mathrm{B}_{1/\Lambda}$ of radius $1/\Lambda$ with the center at the origin, therefore, for $|x|>1/\Lambda$, the equality $G_n^{\phantom{1}}(x)=
G_n^{\Lambda,\mathbf{f}}(x)$ holds. In simple words, we can say that when regularization is introduced, we cut off the increasing function in the region $\mathrm{B}_{1/\Lambda}$.

In the study of quantum field models, the regularized fundamental solution is the kernel of the operator in a quadratic form, the value of which should not be negative. In other words, the Fourier transform of the function $G_n^{\Lambda,\mathbf{f}}(\cdot)$ must not take negative values for all argument values and all $\Lambda>N>0$ for some fixed number $N$. In mathematical language, such relationship can be formulated as follows
\begin{equation}\label{31-5}
\hat{G}_n^{\Lambda,\mathbf{f}}(y)=
\int_{\mathbb{R}^n}\mathrm{d}^nx\,e^{i(y,x)}
G_n^{\Lambda,\mathbf{f}}(x)\geqslant0,
\end{equation}
for all $y\in\mathbb{R}^n$ and $\Lambda>N>0$. The latter relation is an applicability criterion for the regularization. It defines a class of valid functions $\mathbf{f}(\cdot)$ for which the spectral density does not have negative values.

In the short form, the main results of the work can be summarized as follows:
\begin{itemize}
	\item Lemma \ref{31-le1} contains the formulation of the criterion with respect to the deforming function;
	\item Lemma \ref{31-le2} shows that the set of functions satisfying the criterion is not empty;
	\item Lemma \ref{31-le3} gives an explicit form of a function satisfying the criterion;
	\item in Lemma \ref{31-le4} a function satisfying the criterion in a strict formulation is constructed.
\end{itemize}

\section{Results}
\label{31:sec:lem}

\begin{lemma}\label{31-le1}
Taking into account all the above, the applicability criterion \eqref{31-5} can be equivalently represented by the condition
\begin{equation}\label{31-6}
\frac{s^2}{n-2}\int_0^1\mathrm{d}t\,t^{n-1}\rho_n(ts)\mathbf{f}\big(t^2\big)+\rho_n(s)
\geqslant0\,\,\,\mbox{for all}\,\,\,s\geqslant0,
\end{equation}
where
\begin{equation}\label{31-7}
\rho_n(s)=\Gamma(n/2)(s/2)^{1-n/2}J_{n/2-1}(s),
\end{equation}
and $J_{n/2-1}(\cdot)$ is the Bessel function of the first kind.
\end{lemma}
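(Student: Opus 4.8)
The plan is to compute the Fourier transform $\hat{G}_n^{\Lambda,\mathbf{f}}(y)$ explicitly and to show that it factors as $|y|^{-2}$ times the bracket appearing in \eqref{31-6}, so that nonnegativity of the transform is equivalent to nonnegativity of that bracket. Since $G_n^{\Lambda,\mathbf{f}}$ is radial, I would first reduce the $n$-dimensional integral in \eqref{31-5} to a one-dimensional Hankel transform through the standard identity
\[
\int_{\mathbb{R}^n}\mathrm{d}^nx\,e^{i(y,x)}g(|x|)=(2\pi)^{n/2}|y|^{1-n/2}\int_0^\infty \mathrm{d}r\,r^{n/2}J_{n/2-1}(|y|r)\,g(r),
\]
and I would introduce the rescaled variable $s=|y|/\Lambda$, which absorbs the parameter $\Lambda$ via the substitution $x\mapsto x/\Lambda$ and which, as $y$ and $\Lambda>N$ vary, sweeps all of $[0,+\infty)$.

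First I would treat the deforming term $\frac{\Lambda^{n-2}}{(n-2)S_{n-1}}\mathbf{f}(|x|^2\Lambda^2)$. After rescaling and rewriting $J_{n/2-1}$ through $\rho_n$ via \eqref{31-7}, the constant and angular factors collapse (using $S_{n-1}=2\pi^{n/2}/\Gamma(n/2)$), and since $\mathrm{supp}(\mathbf{f})\subset[0,1]$ the radial integral runs only over $[0,1]$. I expect this to produce exactly
\[
\frac{1}{|y|^2}\cdot\frac{s^2}{n-2}\int_0^1\mathrm{d}t\,t^{n-1}\rho_n(ts)\mathbf{f}\big(t^2\big),
\]
i.e. the first term of \eqref{31-6} divided by $|y|^2$.

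The harder part is the transform of $G_n^{\Lambda,\mathbf{0}}$, which I would handle by writing $G_n^{\Lambda,\mathbf{0}}=G_n+h$, where $h(x)=\frac{\Lambda^{n-2}-|x|^{2-n}}{(n-2)S_{n-1}}$ on $\mathrm{B}_{1/\Lambda}$ and $h\equiv0$ outside. This splits the (distributional) transform into the well-known $\hat G_n(y)=|y|^{-2}$ plus the genuinely convergent transform of the compactly supported $h$. Rescaling $h$ and applying the Hankel formula reduces $\hat h(y)$ to the two elementary Bessel integrals $\int_0^1 r^{n/2}J_{n/2-1}(sr)\,\mathrm{d}r$ and $\int_0^1 r^{2-n/2}J_{n/2-1}(sr)\,\mathrm{d}r$. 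The first equals $\tfrac1s J_{n/2}(s)$ by $\frac{\mathrm{d}}{\mathrm{d}r}[r^{\nu+1}J_{\nu+1}(sr)]=s\,r^{\nu+1}J_\nu(sr)$, while the second I would evaluate from $\frac{\mathrm{d}}{\mathrm{d}r}[r^{1-\nu}J_{\nu-1}(sr)]=-s\,r^{1-\nu}J_\nu(sr)$ with $\nu=n/2-1$, the lower endpoint contributing the constant $2^{2-n/2}/\Gamma(n/2-1)$.

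Combining these and applying the recurrence $J_{n/2-2}(s)+J_{n/2}(s)=\frac{n-2}{s}J_{n/2-1}(s)$ together with $\Gamma(n/2)=\frac{n-2}{2}\Gamma(n/2-1)$, I expect the endpoint constants to cancel precisely, giving $\hat h(y)=\frac{\rho_n(s)-1}{|y|^2}$ and hence $\hat{G}_n^{\Lambda,\mathbf{0}}(y)=\rho_n(s)/|y|^2$. Adding the two contributions yields $\hat{G}_n^{\Lambda,\mathbf{f}}(y)=|y|^{-2}$ times the left-hand side of \eqref{31-6}. Since $|y|^{-2}>0$, the criterion \eqref{31-5} is equivalent to nonnegativity of that bracket for all $s=|y|/\Lambda\geqslant0$, as claimed. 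The main obstacle is the careful evaluation and exact cancellation in the Bessel integrals for $\hat h$; I would additionally verify the $y\to0$ (equivalently $s\to0$) limit separately, where $\rho_n(0)=1$ keeps all expressions finite.
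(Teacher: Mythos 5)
Your proposal is correct and follows essentially the same route as the paper: substitute the decomposition \eqref{31-2} into \eqref{31-5}, reduce the radial Fourier transforms to one-dimensional Bessel integrals, and pass to the variable $s=|y|/\Lambda$. The only difference is one of self-containedness: you derive the identity $\hat{G}_n^{\Lambda,\mathbf{0}}(y)=\rho_n(|y|/\Lambda)/|y|^2$ from scratch (your Bessel computation checks out, including the cancellation of the lower-endpoint term against $\hat{G}_n(y)=|y|^{-2}$ via the recurrence and $\Gamma(n/2)=\tfrac{n-2}{2}\Gamma(n/2-1)$), whereas the paper simply cites this relation from \cite{Ivanov-2022} and \cite{31-1}.
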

\begin{proof} Substitute the explicit form of the deformed fundamental solution \eqref{31-2} into  inequality \eqref{31-5} and use the relations, see formulas (20) and (30) in \cite{Ivanov-2022} and Theorem 4.15 in \cite{31-1},
\begin{equation}\label{31-8}
\int_{\mathbb{R}^n}\mathrm{d}^nx\,G_n^{\Lambda,\mathbf{0}}(x)e^{i(x,y)}=\frac{\rho_n(|y|/\Lambda)}{|y|^2}\,\,\,\mbox{and}\,\,\,
\rho_n(|y|)=\frac{1}{S_{n-1}}\int_{\mathrm{S}^{n-1}}\mathrm{d}^{n-1}\sigma(\hat{x})\,e^{i(\hat{x},y)},
\end{equation}
where, in the last equality, integration over the unit sphere $\mathrm{S}^{n-1}$ with the center at the origin with the standard measure was used, $\hat{x}=x/|x|$. Next, using the fact that $|y|\geqslant0$ and $\Lambda>N>0$, we can go to the parameter $s=|y|/\Lambda$, from which the final form \eqref{31-6} follows.
\end{proof}

\begin{lemma}\label{31-le2}
Let $k\in\mathbb{N}\setminus\{0\}$. Consider a set of multi-indices $\{\alpha_i\}_{i=1}^k$, the components of which are positive numbers from the range $(0,1/2]$ and satisfy the relation
\begin{equation}\label{31-9}
2\sum_{j=1}^{\dim(\alpha_i)}(\alpha_i)_j\leqslant1
\,\,\,\mbox{for all}\,\,\,i\in\{1,\ldots,k\}.
\end{equation}
Next, for each multi-index $\alpha_i$, we define an integral operator of the form
\begin{equation}\label{31-10}
\mathrm{H}_{\alpha_i}^{\Lambda}:\,\,\,g(x)\to\mathrm{H}_{\alpha_i}^{\Lambda}(g)(x)=
\Bigg(\prod_{j=1}^{2\dim(\alpha_i)}\int_{\mathrm{S}^{n-1}}\frac{\mathrm{d}^{n-1}\sigma(\hat{x}_j)}{S_{n-1}}\Bigg)\,
g\Bigg(x+\Lambda^{-1}\sum_{m=1}^{\dim(\alpha_i)}\big(\hat{x}_{2m}+\hat{x}_{2m-1}\big)(\alpha_i)_m\Bigg),
\end{equation}
where $g(\cdot)$ is an auxiliary function for which integrals exist. We also define a set of positive numbers $\{\kappa_i\}_{i=1}^k$, such that $\kappa_1+\ldots+\kappa_k=1$. Then the function 
\begin{equation}\label{31-11}
\sum_{i=1}^k\kappa_i\mathrm{H}_{\alpha_i}^{\Lambda}(G_n)(x)
\end{equation}
has a deformation $G^{\Lambda,\mathbf{f}_n}_n(x)$ of the form \eqref{31-2}, and the deforming function $\mathbf{f}_n(\cdot)\in C\big([0,+\infty),\mathbb{R}\big)$ can be written out as follows
\begin{equation}\label{31-12}
\mathbf{f}_n(s)=(n-2)S_{n-1}\sum_{i=1}^k\kappa_i\mathrm{H}_{\alpha_i}^{1}(G_n)(\sqrt{s}\hat{x})-
\begin{cases} 
	\,\,\,\,\,\,1, & \mbox{if}\,\,\,s\leqslant1;\\
	s^{1-n/2}, &\mbox{if}\,\,\,s>1.
\end{cases}
\end{equation}
\end{lemma}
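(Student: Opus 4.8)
The plan is to read the operator $\mathrm H_{\alpha_i}^{\Lambda}$ in \eqref{31-10} as an iterated spherical mean. Performing the $2\dim(\alpha_i)$ independent integrations over the unit sphere is the same as convolving $G_n$ successively with the uniform unit-mass measures on the spheres of radii $(\alpha_i)_1/\Lambda,(\alpha_i)_1/\Lambda,\ldots,(\alpha_i)_{\dim(\alpha_i)}/\Lambda,(\alpha_i)_{\dim(\alpha_i)}/\Lambda$ (each radius occurring twice). These spherical means commute, and the resulting rotationally invariant probability measure $\mu_i$ is supported in the closed ball of radius $R_i=2\Lambda^{-1}\sum_{m}(\alpha_i)_m$; by the constraint \eqref{31-9} one has $R_i\leqslant1/\Lambda$. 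This is the reformulation that makes the harmonicity of $G_n$ usable. I denote by $F(x)$ the function \eqref{31-11}.

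First I would prove the exterior matching. Since $G_n$ is harmonic on $\mathbb R^n\setminus\{0\}$, the mean value property states that one spherical average of radius $r$ leaves $G_n$ unchanged at every point lying at distance greater than $r$ from the origin (and flattens it to a constant inside the ball of radius $r$). Applying the $2\dim(\alpha_i)$ averages one after another, an elementary induction on the partial sums of the radii gives $\mathrm H_{\alpha_i}^{\Lambda}(G_n)(x)=G_n(x)$ whenever $|x|>R_i$, hence in particular for $|x|>1/\Lambda$. Taking the convex combination with weights $\kappa_i$ and using $\kappa_1+\ldots+\kappa_k=1$, I obtain $F(x)=G_n(x)=G_n^{\Lambda,\mathbf 0}(x)$ outside $\mathrm B_{1/\Lambda}$, which is exactly the statement that the deformation in \eqref{31-2} is confined to the ball and that $\mathbf f_n$ must vanish for $s>1$.

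Next I would extract $\mathbf f_n$. Because $G_n$ is homogeneous of degree $2-n$ and the shift in \eqref{31-10} carries the factor $\Lambda^{-1}$, the substitution $x\mapsto\Lambda^{-1}x$ yields the scaling identity $\mathrm H_{\alpha_i}^{\Lambda}(G_n)(x)=\Lambda^{2-n}\mathrm H_{\alpha_i}^{1}(G_n)(\Lambda x)$. This is the decisive point: it shows that $(n-2)S_{n-1}\Lambda^{2-n}\big(F(x)-G_n^{\Lambda,\mathbf 0}(x)\big)$ depends on $x$ and $\Lambda$ only through $s=|x|^2\Lambda^2$, so that solving \eqref{31-2} for $\mathbf f$ actually defines a $\Lambda$-independent function of $s$. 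Writing $\Lambda x=\sqrt s\,\hat x$ and inserting the two-branch expression for $G_n^{\Lambda,\mathbf 0}$ reproduces \eqref{31-12} term by term, while evaluating the same formula for $s>1$ together with the exterior matching confirms $\mathbf f_n(s)=0$ there, i.e. $\mathrm{supp}\big(\mathbf f_n(\cdot)\big)\subset[0,1]$.

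Finally I would settle the regularity. Each $\mathrm H_{\alpha_i}^{1}(G_n)$ is bounded and continuous on all of $\mathbb R^n$, including the origin, because already the innermost spherical average turns the locally singular profile $|z|^{2-n}$ into a bounded continuous one, and averaging a bounded continuous function against the remaining probability measures preserves these properties; combined with the matching of the two branches at $s=1$ this gives $\mathbf f_n\in C\big([0,+\infty),\mathbb R\big)$. I expect the exterior-matching step to be the main obstacle: one must argue that the composition of spherical means reproduces $G_n$ exactly, not merely some radial harmonic function, and it is precisely here that \eqref{31-9} enters, through the bound $R_i\leqslant1/\Lambda$ on the support radius of $\mu_i$. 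The remaining points—the scaling identity and the routine check that the deforming term $\Lambda^{n-2}\mathbf f_n(|x-y|^2\Lambda^2)$ has $A_n$-image tending to zero as $\Lambda\to+\infty$—are then comparatively direct.
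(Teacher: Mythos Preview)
Your argument for the structural part of the lemma---that \eqref{31-11} matches $G_n$ outside $\mathrm B_{1/\Lambda}$, that the difference scales as $\Lambda^{n-2}$ times a function of $|x|^2\Lambda^2$, and that \eqref{31-12} follows by inverting \eqref{31-2}---is correct and is essentially the paper's own route: the paper also reads $\mathrm H_{\alpha_i}^{\Lambda}$ as iterated spherical averaging, invokes the identity $\frac{1}{S_{n-1}}\int_{\mathrm S^{n-1}}G_n(y+r\hat x)\,\mathrm d\sigma(\hat x)=G_n^{1/r,\mathbf 0}(y)$ for the first average, and then tracks supports inductively exactly as you do with the mean-value property.

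What you leave out is the point the paper actually cares about. In the paper's proof the central computation is the Fourier transform
\[
\int_{\mathbb R^n}e^{i(y,x)}\,\mathrm H_{\alpha_i}^{\Lambda}(G_n)(x)\,\mathrm d^nx
=\frac{1}{|y|^2}\prod_{j=1}^{\dim(\alpha_i)}\Big(\rho_n\big(|y|(\alpha_i)_j/\Lambda\big)\Big)^{2}\geqslant 0,
\]
obtained from \eqref{31-8}; this is what shows that $\mathbf f_n$ satisfies the applicability criterion \eqref{31-5}/\eqref{31-6}, and it is the reason each radius in \eqref{31-10} is taken \emph{twice}. The lemma's statement does not spell this out, but the paper's proof treats it as part of the claim (and the introduction summarizes Lemma~\ref{31-le2} as ``the set of functions satisfying the criterion is not empty''). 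Your proposal never touches non-negativity of the spectral density, so relative to the paper's proof that step is missing; everything else you wrote lines up with the paper.
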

\begin{proof} Note that if each function from the set satisfies the criterion \eqref{31-6}, then their convex linear combination also satisfies it. Therefore, without limiting generality, it is sufficient to consider only the case of $\alpha_i$ for a fixed index $i$. 
	
Note that by definition the operator \eqref{31-10} is a multiple homogenization ($2\dim(\alpha_i)$ times). At the same time, it follows from formulas \eqref{31-8} that even the first homogenization of the function $G_n^{\phantom{1}}(x)$ with a radius $r>0$ is equal to
\begin{equation*}\label{31-13}
\frac{1}{S_{n-1}}\int_{\mathrm{S}^{n-1}}\mathrm{d}^{n-1}\sigma(\hat{x})\,G_n^{\phantom{1}}(y+r\hat{x})=G_n^{1/r,\mathbf{0}}(y)
\end{equation*}
and is a bounded function. Therefore, the operator $\mathrm{H}_{\alpha_i}^\Lambda$ regularizes the fundamental solution. Let us make sure that the criterion from \eqref{31-5} is satisfied with such deformation. To do this, we apply the Fourier transform and use formulas \eqref{31-8}, then we get the following non-negative density
\begin{equation*}\label{31-14}
\int_{\mathbb{R}^n}\mathrm{d}^nx\,e^{i(y,x)}
\mathrm{H}_{\alpha_i}^{\Lambda}(G_n)(x)=\frac{1}{|y|^2}\prod_{j=1}^{\dim(\alpha_i)}
\Big(\rho_n\big(|y|(\alpha_i)_j/\Lambda\big)\Big)^2\geqslant0.
\end{equation*}
Next, check that the function $\mathrm{H}_{\alpha_i}^{\Lambda}(G_n)(x)$ fits into the representation \eqref{31-2}. To do this, note that if the support of some auxiliary function $g(\cdot)$ lies in a ball $\mathrm{B}_{r_1}$ of a radius $r_1$, then for the support of homogenized function with a radius $r_2$ the following relation holds
\begin{equation*}\label{31-15}
\mathrm{supp}\Bigg(\int_{\mathrm{S}^{n-1}}\frac{\mathrm{d}^{n-1}\sigma(\hat{y})}{S_{n-1}}\,
	g(\,\,\cdot\,+r_2\hat{y})\Bigg)\subset\mathrm{B}_{r_1+r_2}.
\end{equation*}
Therefore, if at some step we get the function $\tilde{G}(x)$, which can be obtained from $G_n(x)$ by deforming in the ball $\mathrm{B}_{r_1}$, that is, $\tilde{G}(x)=G_n(x)$ in $\mathbb{R}^n\setminus\mathrm{B}_{r_1}$, then additional homogenization with a radius $r_2$ leads to the equality
\begin{equation*}\label{31-16}
\int_{\mathrm{S}^{n-1}}\frac{\mathrm{d}^{n-1}\sigma(\hat{y})}{S_{n-1}}\,
	\tilde{G}(x+r_2\hat{y})=
\int_{\mathrm{S}^{n-1}}\frac{\mathrm{d}^{n-1}\sigma(\hat{y})}{S_{n-1}}\,
\Big(\tilde{G}(x+r_2\hat{y})-G_n(x+r_2\hat{y})\Big)+G_n^{1/r_2,\mathbf{0}}=
G_n(x)
\end{equation*}
for all $x\in\mathbb{R}^n\setminus\mathrm{B}_{r_1+r_2}$. This procedure can be applied step by step to the operator from \eqref{31-10}. Thus, given relation \eqref{31-9}, it can be argued that $\mathrm{H}_{\alpha_i}^{\Lambda}(G_n)(x)=G_n(x)$ for all $x\in\mathbb{R}^n\setminus\mathrm{B}_{1/\Lambda}$, including for $|x|=1/\Lambda$ due to the continuity of the homogenized function. This means that the function $\mathrm{H}_{\alpha_i}^{\Lambda}(G_n)(\cdot)$ has the deformation of the form \eqref{31-2}. 

The last formula \eqref{31-12} is obtained by reversing formula \eqref{31-2} with additional scaling of the variables. Note that the resulting function is spherically symmetric and in fact does not depend on the choice of the unit vector $\hat{x}$.
\end{proof}

\begin{lemma}\label{31-le3} Let the assumptions of Lemma \ref{31-le2} be satisfied. Let us take $k=1$, $\dim(\alpha_1)=1$, and $(\alpha_1)_1=1/2$, then the function from \eqref{31-12} is written out explicitly
\begin{equation}\label{31-17}
\mathbf{f}_n(s)=2^{n-2}-1-\frac{2^{n-1}\sqrt{s}S_{n-2}}{S_{n-1}}{}_2F_{1}\bigg(\frac{1}{2},\frac{3-n}{2};\frac{3}{2};s\bigg)+\frac{2^{n-1}\sqrt{s}S_{n-2}}{(n-1)S_{n-1}}{}_2F_{1}\bigg(\frac{3-n}{2},\frac{n-1}{2};\frac{n+1}{2};s\bigg)
\end{equation}
for $s\in[0,1]$, and $\mathbf{f}_n(s)=0$ for $s>1$. Here ${}_2F_{1}$ is the hypergeometric function. In particular, in the dimensions $n\in\{3,4,5,6\}$ in the domain $s\in[0,1]$ we have the following explicit form
\begin{equation*}\label{31-18}
\mathbf{f}_3(s)=1-\sqrt{s},
\end{equation*}
\begin{equation*}\label{31-19}
\mathbf{f}_4(s)=3-\frac{4s+2}{\pi}\bigg(\frac{1-s}{s}\bigg)^{1/2}+\frac{2}{\pi}\bigg(\frac{1}{s}-4\bigg)\mathrm{arcsin}\big(\sqrt{s}\big),
\end{equation*}
\begin{equation*}\label{31-20}
\mathbf{f}_5(s)=7-9\sqrt{s}+2s^{3/2},
\end{equation*}
\begin{equation*}\label{31-21}
\mathbf{f}_6(s)=15+\frac{2}{3\pi}\big(16s^3-56s^2-2s-3\big)\bigg(\frac{1-s}{s^3}\bigg)^{1/2}+
\frac{2}{\pi}\bigg(\frac{1}{s^2}-16\bigg)\mathrm{arcsin}\big(\sqrt{s}\big).
\end{equation*}
The functions are shown in Fig. \ref{31-pic}.
\end{lemma}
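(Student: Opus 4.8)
The plan is to specialize the representation \eqref{31-12} and then evaluate the resulting double spherical average in closed form. With $k=1$, $\kappa_1=1$, $\dim(\alpha_1)=1$ and $(\alpha_1)_1=1/2$, the operator \eqref{31-10} reduces to the iterated average $\mathrm{H}^1_{\alpha_1}(G_n)(x)=\iint_{(\mathrm S^{n-1})^2}\frac{\mathrm d^{n-1}\sigma(\hat x_1)\,\mathrm d^{n-1}\sigma(\hat x_2)}{S_{n-1}^2}\,G_n\big(x+(\hat x_1+\hat x_2)/2\big)$. Since the prefactor $(n-2)S_{n-1}$ cancels the normalization of $G_n$, the quantity $(n-2)S_{n-1}\mathrm H^1_{\alpha_1}(G_n)(\sqrt s\hat x)$ is the double sphere-average of $|\,\cdot\,|^{2-n}$ and, by rotational invariance, depends on $\hat x$ only through $s$. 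First I would carry out the inner average over $\hat x_1$. By the mean value property already used in \eqref{31-8}---equivalently Newton's theorem for the Newtonian kernel---this produces the capped kernel, namely $2^{n-2}$ for $|w|\le 1/2$ and $|w|^{2-n}$ for $|w|>1/2$, where $w=\sqrt s\hat x+\hat x_2/2$; this is precisely $(n-2)S_{n-1}G^{2,\mathbf 0}_n(w)$.

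Next I would perform the outer average over $\hat x_2$. Writing $u=\cos\theta$ for the angle between $\hat x_2$ and $\hat x$ and using $\frac1{S_{n-1}}\int_{\mathrm S^{n-1}}(\cdots)\,\mathrm d^{n-1}\sigma=\frac{S_{n-2}}{S_{n-1}}\int_{-1}^1(\cdots)(1-u^2)^{(n-3)/2}\,\mathrm du$ together with $|w|^2=s+\sqrt s\,u+1/4$, the cap is active exactly for $u\le-\sqrt s$. This splits the average into a constant piece on $[-1,-\sqrt s]$ and a power piece on $[-\sqrt s,1]$:
\[(n-2)S_{n-1}\mathrm H^1_{\alpha_1}(G_n)(\sqrt s\hat x)=\frac{S_{n-2}}{S_{n-1}}\Big(2^{n-2}\!\int_{-1}^{-\sqrt s}\!(1-u^2)^{\frac{n-3}2}\mathrm du+\int_{-\sqrt s}^{1}\!(s+\sqrt s u+\tfrac14)^{\frac{2-n}2}(1-u^2)^{\frac{n-3}2}\mathrm du\Big).\]
The constant piece is an incomplete beta integral: using the symmetry $u\mapsto-u$, the normalization $\frac{S_{n-2}}{S_{n-1}}B(\tfrac12,\tfrac{n-1}2)=1$, and the elementary identity $\int_0^{\sqrt s}(1-u^2)^{\frac{n-3}2}\mathrm du=\sqrt s\,{}_2F_1(\tfrac12,\tfrac{3-n}2;\tfrac32;s)$, it contributes $2^{n-3}-\tfrac12\tfrac{2^{n-1}S_{n-2}}{S_{n-1}}\sqrt s\,{}_2F_1(\tfrac12,\tfrac{3-n}2;\tfrac32;s)$.

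The heart of the proof---and the step I expect to be the main obstacle---is the explicit evaluation of the power piece. After the affine substitution sending $[-\sqrt s,1]$ to $[0,1]$, its integrand carries the weight $t^{(n-1)/2-1}$ against two distinct linear factors, so it is of Appell $F_1$ type rather than a single Euler integral; I would expand one of these factors by the binomial series and integrate term by term against the standard incomplete-beta integrals, organizing the resulting double series (equivalently, reducing the $F_1$ to Gauss hypergeometric functions) into $2^{n-3}-\tfrac12\tfrac{2^{n-1}S_{n-2}}{S_{n-1}}\sqrt s\,{}_2F_1(\tfrac12,\tfrac{3-n}2;\tfrac32;s)+\tfrac{2^{n-1}S_{n-2}}{(n-1)S_{n-1}}\sqrt s\,{}_2F_1(\tfrac{3-n}2,\tfrac{n-1}2;\tfrac{n+1}2;s)$. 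Adding the two pieces, the two copies of $2^{n-3}$ merge into $2^{n-2}$ and the two ${}_2F_1(\tfrac12,\tfrac{3-n}2;\tfrac32;s)$ terms add, giving exactly \eqref{31-17} after subtracting the summand $1$ prescribed by \eqref{31-12} on $[0,1]$; the vanishing $\mathbf f_n(s)=0$ for $s>1$ is inherited from the support statement established in Lemma~\ref{31-le2}. Finally, for $n\in\{3,5\}$ the parameter $\tfrac{3-n}2$ is a non-positive integer, so both hypergeometric series terminate and yield the algebraic expressions for $\mathbf f_3,\mathbf f_5$; for $n\in\{4,6\}$ the same ${}_2F_1$'s reduce by standard closed forms to the stated arcsine-and-square-root expressions for $\mathbf f_4,\mathbf f_6$, which I would confirm by direct substitution.
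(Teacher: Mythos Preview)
Your approach is correct in outline but takes a genuinely different route from the paper. You evaluate $\mathrm H^{1}_{\alpha_1}(G_n)$ directly: the inner spherical average collapses by Newton's theorem to the capped kernel $G_n^{2,\mathbf 0}$, and the outer average becomes the one-dimensional integral you wrote, split at $u=-\sqrt s$. The constant piece is indeed an incomplete beta integral and produces exactly the first ${}_2F_1$ term. The remaining ``power piece'' is the weak point of your plan: as you recognize, after the affine change it is an Appell $F_1$ type integral with three distinct linear factors, and turning it into the two Gauss hypergeometrics you quote requires a nontrivial reduction that you only sketch (``organizing the resulting double series''). The formula you assert for it is correct---it checks at $s=0$ and against the closed forms for small $n$---but the actual derivation is not written down.

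The paper sidesteps this difficulty entirely. Instead of averaging $G_n$, it applies the Laplacian first: $A_n\mathrm H^{1}_{\alpha_1}(G_n)$ is a double spherical average of $\delta$-functions, and one integration over the sphere collapses the $n$-dimensional $\delta$ to a one-dimensional $\delta$ in the radius. The remaining sphere integral is then elementary and yields the closed form $A_n(x)\mathrm H^{1}_{\alpha_1}(G_n)(x)=\frac{2^{n-1}S_{n-2}}{S_{n-1}^2}\,|x|^{-1}(1-|x|^2)^{(n-3)/2}$ on $|x|\le 1$. Since this is radial, one writes $A_n=-|x|^{1-n}\partial_{|x|}|x|^{n-1}\partial_{|x|}$ and integrates twice, matching to $G_n$ at $|x|=1$; each antiderivative is a single Euler integral that gives one of the two ${}_2F_1$'s in \eqref{31-17} directly. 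So the paper trades your hard Appell-type evaluation for the much simpler task of integrating a radial density twice. Your route is more hands-on and stays closer to the definition \eqref{31-12}; the paper's route buys a cleaner computation at the cost of the extra idea of differentiating before averaging.
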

\begin{proof} Let us apply the Laplace operator to the function \eqref{31-11} under the condition $\Lambda=1$, then we can write out the following relation with the $\delta$-function in the $n$-dimensional space
\begin{equation}\label{31-22}
A_n(x)\mathrm{H}_{\alpha_1}^1(G_n)(x)=
\int_{\mathrm{S}^{n-1}}\frac{\mathrm{d}^{n-1}\sigma(\hat{y})}{S_{n-1}}
\int_{\mathrm{S}^{n-1}}\frac{\mathrm{d}^{n-1}\sigma(\hat{z})}{S_{n-1}}\,
\delta(x+\hat{y}/2+\hat{z}/2).
\end{equation}
Further, we use the formula
\begin{equation}\label{31-23}
\int_{\mathrm{S}^{n-1}}\mathrm{d}^{n-1}\sigma(\hat{z})\,
\delta(x+\hat{y}/2+\hat{z}/2)=2^{n-1}\delta(|x+y/2|-1/2),
\end{equation}
where the right hand side already contains the $\delta$-function in the one-dimensional space. Then, in the remaining integration over the sphere, we go to the following spherical coordinates
\begin{equation*}\label{31-24}
\hat{y}_1=\cos(\phi_1),\,\,\,
\hat{y}_2=\sin(\phi_1)\cos(\phi_2),\,\,\,\ldots\,\,\,
\hat{y}_{n-1}=\sin(\phi_1)\cdot\ldots\cdot\sin(\phi_{n-2})\cos(\phi_{n-1}),
\end{equation*}
\begin{equation*}\label{31-25}
\mathrm{d}^{n-1}\sigma(\hat{y})=\sin^{n-2}(\phi_1)\sin^{n-3}(\phi_2)\cdot\ldots\cdot\sin(\phi_{n-2})\,\mathrm{d}\phi_1\ldots\mathrm{d}\phi_{n-1},
\end{equation*}
where $\phi_i\in[0,\pi]$ for $i\in\{1,\ldots,n-2\}$, and $\phi_{n-1}\in[0,2\pi)$.
For convenience, we choose them in such a way that the relation is fulfilled
\begin{equation*}\label{31-26}
(x,\hat{y})=|x|\cos(\phi_1).
\end{equation*}
Then, taking into account \eqref{31-23}, the right hand side of \eqref{31-22} is rewritten as
\begin{equation*}\label{31-27}
\frac{2^{n-1}}{S_{n-1}^2}
\int_{\mathrm{S}^{n-1}}\mathrm{d}^{n-1}\sigma(\hat{y})\,
\delta(|x+y/2|-1/2)=\frac{2^{n-1}S_{n-2}}{S_{n-1}^2}
\int_0^\pi\mathrm{d}\phi_1\,\sin^{n-2}(\phi_1)
\delta\bigg(\sqrt{|x|^2+|x|\cos(\phi_1)+\frac{1}{4}}-\frac{1}{2}\bigg).
\end{equation*}
The last integral is explicitly calculated, and the intermediate answer takes the form
\begin{equation}\label{31-28}
A_n(x)\mathrm{H}_{\alpha_1}^1(G_n)(x)=\frac{2^{n-1}S_{n-2}}{S_{n-1}^2}
\begin{cases}
|x|^{-1}\big(1-|x|^2\big)^{\frac{(n-3)}{2}},& \mbox{if}\,\,\,|x|\leqslant1;\\
~~~~~~~~~~~~0,& \mbox{if}\,\,\,|x|>1.
\end{cases}
\end{equation}
Using the fact that the latter function depends only on $|x|$, the operator $A_n(x)$ can be represented as $-|x|^{1-n}\partial_{|x|}|x|^{n-1}\partial_{|x|}$. Then, integrating the function \eqref{31-28}, taking into account the condition that the result must coincide with $G_n(x)$ for $|x|>1$, and substituting it into formula \eqref{31-12}, we obtain the declared relation from \eqref{31-17}. Special cases follow from the definition of the hypergeometric function.
\end{proof}
\begin{figure}[h]
	\center{\includegraphics[width=0.4\linewidth]{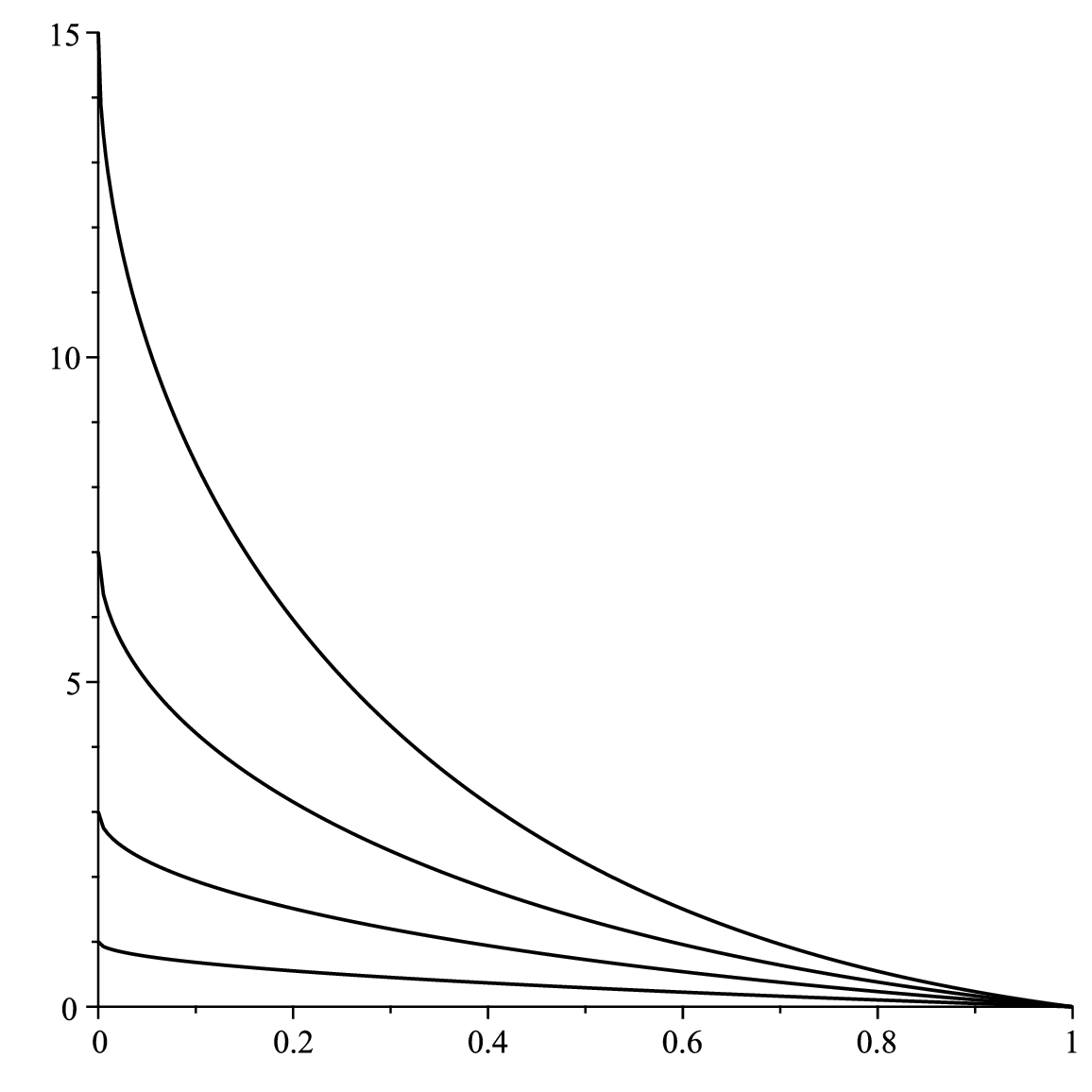}}
	\caption{The function $\mathbf{f}_n(s)$ for $n\in\{3,4,5,6\}$ and $s\in[0,1]$. In the picture $\mathbf{f}_3(0)<\mathbf{f}_4(0)<\mathbf{f}_5(0)<\mathbf{f}_6(0)$.}
	\label{31-pic}
\end{figure}

\begin{lemma}\label{31-le4}  Let the assumptions of Lemma \ref{31-le3} be satisfied.
Let us consider two sets of positive numbers $\{\kappa_i\}_{i=1}^{+\infty}$ and $\{r_i\}_{i=1}^{+\infty}$, satisfying the following conditions
\begin{equation}\label{31-29}
\sum_{i=1}^{+\infty}\kappa_i=1,\,\,\,
\lim_{i\to+\infty}(r_i)=0,\,\,\,
\sum_{i=1}^{+\infty}\kappa_i^{\phantom{1}}r_i^{-n}=\kappa<+\infty.
\end{equation}
Let $r=\max_{i\geqslant1}(r_i)$. Next, we connect the deformed fundamental solution $G_n^{\Lambda/r_i,\mathbf{f}_n}$ to each $r_i$, where $\mathbf{f}_n(\cdot)$ is from \eqref{31-17}. Then the function
\begin{equation}\label{31-30}
\sum_{i=1}^{+\infty}\kappa_iG_n^{\Lambda/r_i,\mathbf{f}_n}
\end{equation}
is a deformation $G_n^{\Lambda/r,\mathbf{\tilde{f}}_n}$ of the form \eqref{31-2} for the solution $G_n^{\phantom{1}}(x)$, and the corresponding deforming function $\mathbf{\tilde{f}}_n(\cdot)$ solves the inequality from \eqref{31-6} in the strong formulation, that is, with the sign $>$ instead of $\geqslant$.
\end{lemma}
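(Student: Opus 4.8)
The plan is to separate the statement into two claims: first, that the series \eqref{31-30} is again a deformation of the form \eqref{31-2}, with effective cutoff parameter $\Lambda/r$; and second, that the associated deforming function $\mathbf{\tilde{f}}_n$ satisfies \eqref{31-6} with a strict inequality. The organising tool is the identification, already implicit in Lemmas \ref{31-le1} and \ref{31-le2}, of the left hand side of \eqref{31-6} with $|y|^2\hat{G}_n^{\Lambda,\mathbf{f}}(y)$ taken at $s=|y|/\Lambda$. For the single building block of Lemma \ref{31-le3} this quantity equals $\big(\rho_n(|y|/(2\Lambda))\big)^2$, so that, writing $s=|y|r/\Lambda$ for the parameter $\Lambda/r$ and using $|y|r_i/(2\Lambda)=sr_i/(2r)$, the criterion value of the combination \eqref{31-30} becomes, by linearity of the Fourier transform,
\begin{equation*}
\sum_{i=1}^{+\infty}\kappa_i\big(\rho_n(sr_i/(2r))\big)^2.
\end{equation*}
Thus the first task is to justify the term-by-term Fourier transform of the infinite series, and the second is to show that this nonnegative sum never vanishes.

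Next I would establish the deformation structure. Outside the largest ball $\mathrm{B}_{r/\Lambda}$ every summand $G_n^{\Lambda/r_i,\mathbf{f}_n}$ coincides with $G_n$, since each individual cutoff radius $r_i/\Lambda$ does not exceed $r/\Lambda$; because $\sum_i\kappa_i=1$, the series \eqref{31-30} equals $G_n$ on $\mathbb{R}^n\setminus\mathrm{B}_{r/\Lambda}$, which already exhibits it as a deformation supported in $\mathrm{B}_{r/\Lambda}$, i.e.\ with parameter $\Lambda/r$ and $\mathrm{supp}(\mathbf{\tilde{f}}_n)\subset[0,1]$. The function $\mathbf{\tilde{f}}_n$ is then recovered by inverting \eqref{31-2}, exactly as in the final step of Lemma \ref{31-le2}. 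Here the hypotheses \eqref{31-29} enter: the value of \eqref{31-30} at the origin is controlled by $\sum_i\kappa_i r_i^{-(n-2)}$, which is finite because $r_i\to0$ forces $r_i\leqslant1$ for all large $i$, whence $r_i^{-(n-2)}\leqslant r_i^{-n}$ there and $\sum_i\kappa_i r_i^{-n}=\kappa<+\infty$; the same bound, together with the uniform concentration of the rescaled profiles, yields continuity of $\mathbf{\tilde{f}}_n$ and the decay of $A_n(x)(\Lambda/r)^{n-2}\mathbf{\tilde{f}}_n$ demanded in \eqref{31-2}.

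Finally I would prove the strict inequality. Since $\rho_n(0)=1$ by the small-argument behaviour of the Bessel function and $\rho_n$ is continuous, there is $\varepsilon>0$ with $\rho_n(u)\geqslant1/2$ for $u\in[0,\varepsilon]$. Fix $s\geqslant0$. Because $r_i\to0$, all but finitely many indices satisfy $sr_i/(2r)<\varepsilon$, so for those $i$ the term $\kappa_i\big(\rho_n(sr_i/(2r))\big)^2\geqslant\kappa_i/4>0$; since the remaining tail of the $\kappa_i$ has positive sum, the whole series is strictly positive for every $s\geqslant0$ (at $s=0$ it equals $\sum_i\kappa_i=1$). A common zero of all summands is impossible precisely because the arguments accumulate at $0$, where $\rho_n$ does not vanish. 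This gives \eqref{31-6} with the sign $>$.

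The main obstacle is the middle step rather than the last: the Fourier expression and the strict positivity are short once the representation is in place, whereas verifying that the infinite convex combination is a genuine deformation of the form \eqref{31-2}—legitimacy of the term-by-term Fourier transform, continuity and correct support of $\mathbf{\tilde{f}}_n$, and the vanishing of the regularised Laplacian as $\Lambda\to+\infty$—is exactly where the quantitative conditions \eqref{31-29}, in particular the convergence of $\sum_i\kappa_i r_i^{-n}$, are genuinely needed.
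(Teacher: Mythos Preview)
Your proposal is correct and follows essentially the same route as the paper: you argue coincidence with $G_n$ outside $\mathrm{B}_{r/\Lambda}$ via $\sum_i\kappa_i=1$, invoke the summability condition $\sum_i\kappa_i r_i^{-n}<\infty$ to control the series inside the ball (the paper makes this explicit via a uniform bound and the Weierstrass M-test, and notes that the extra two powers of $r_i^{-1}$ beyond $r_i^{2-n}$ are what guarantee two derivatives), compute the Fourier transform termwise to get $|y|^{-2}\sum_i\kappa_i\big(\rho_n(|y|r_i/(2\Lambda))\big)^2$, and obtain strict positivity from $r_i\to0$ forcing some argument into the region where $\rho_n$ is bounded away from zero. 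The only cosmetic difference is that the paper picks a single index $j$ with $|y|r_j/\Lambda$ below the first zero of $\rho_n$, whereas you use a whole tail with $\rho_n\geqslant1/2$; both yield the same conclusion.
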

\noindent\textbf{Remark.} As examples of the sequences, we can take $\kappa_i=2^{-i}$ and $r_i=2^{-i/(2n)}$.
\begin{proof} Note that each individual summand in the sum \eqref{31-30} has the deformation of the form \eqref{31-2}. Moreover, the function $G_n^{\Lambda/r_i,\mathbf{f}_n}$ coincides with $G_n^{\phantom{1}}(x)$ for all $i\geqslant1$ and $x\in\mathbb{R}^n\setminus\mathrm{B}_{r/\Lambda}$. Therefore, given the first equality from \eqref{31-29}, we make sure that the sum \eqref{31-30} is equal to $G_n^{\phantom{1}}(x)$ in the domain $x\in\mathbb{R}^n\setminus\mathrm{B}_{r/\Lambda}$. Next, check that the sum is finite in the region $\mathrm{B}_{r/\Lambda}$. Let us define the following number
\begin{equation*}\label{31-33}
M=\max_{x\in\mathrm{B}_{1}}\Bigg(
\frac{\big|\mathbf{f}_n\big(|x|^2\big)\big|+1}{(n-2)S_{n-1}}\Bigg).
\end{equation*}
Then, using that the inclusion $\mathrm{B}_{r_i/\Lambda}\subset\mathrm{B}_{r/\Lambda}$ holds for all $i\geqslant1$, each function $G_n^{\Lambda/r_i,\mathbf{f}_n}$ in the ball with the radius $r/\Lambda$ can be estimated as follows
\begin{equation*}\label{31-31}
\big|G_n^{\Lambda/r_i,\mathbf{f}_n}(x)\big|\leqslant
\frac{(\Lambda/r_i)^{n-2}}{(n-2)S_{n-1}}\big|\mathbf{f}_n\big(|x|^2\Lambda^2/r_i^2\big)\big|+
\frac{(\Lambda/r_i)^{n-2}}{(n-2)S_{n-1}}\leqslant(\Lambda/r_i)^{n-2}M.
\end{equation*}
Thus, the following estimation is valid
\begin{equation*}\label{31-32}
\bigg|\sum_{i=1}^{+\infty}\kappa_iG_n^{\Lambda/r_i,\mathbf{f}_n}(x)\bigg|\leqslant
\sum_{i=1}^{+\infty}\kappa_i\Big|G_n^{\Lambda/r_i,\mathbf{f}_n}(x)\Big|\leqslant
\Lambda^{n-2}Mr^2\sum_{i=1}^{+\infty}\kappa_i^{\phantom{1}}r_i^{-n}(r_i/r)^2\leqslant\Lambda^{n-2}Mr^2\kappa,
\end{equation*}
which implies the boundedness and existence of the deformed function. The continuity of the limit function follows from uniform convergence, which can be obtained with the usage of the Weierstrass M-test. Despite the fact that for the last estimate it is sufficient to have the convergence of the series with the terms $\kappa_i^{\phantom{1}}r_i^{2-n}$, the addition in the power ($r_i^{-n}$ instead of $r_i^{2-n}$) ensures the existence of two derivatives.

Let us show that relation \eqref{31-5} holds in the strong form for the function $G_n^{\Lambda/r,\mathbf{\tilde{f}}_n}$. To do this, we apply the Fourier transform
\begin{equation}\label{31-34}
\hat{G}_n^{\Lambda/r,\mathbf{\tilde{f}}_n}(y)=
\sum_{i=1}^{+\infty}\kappa_i\hat{G}_n^{\Lambda/r_i,\mathbf{f}_n}(y)=
\frac{1}{|y|^2}\sum_{i=1}^{+\infty}\kappa_i\Big(\rho_n\big(|y|r_i/(2\Lambda)\big)\Big)^2.
\end{equation}
Next, it must be shown that for any value $y\in\mathbb{R}^n$ and for all $\Lambda>N>0$ for some fixed $N$ the last sum is strictly greater than zero. Note that the function $\rho_n(\cdot)$ is oscillating, see \eqref{31-7}, and it starts with the point $\rho_n(0)=1$. Let the number $\theta$ denote the first zero. Then we note that from the second relation from \eqref{31-29} it follows that there are numbers $r_j$, which are less than an arbitrarily small fixed value. Next, we fix an arbitrary $N>0$. Then, for any $y\in\mathbb{R}^n$, one can find such a small $r_j$ that the inequality $|y|r_j/\Lambda\leqslant\theta$ holds for all $\Lambda>N$. In this case, continuing \eqref{31-34}, we can write
\begin{equation*}\label{31-35}
	\hat{G}_n^{\Lambda/r,\mathbf{\tilde{f}}_n}(y)\geqslant
	\frac{\kappa_jm^2}{|y|^2}>0,\,\,\,\mbox{where}\,\,\,
m=\min_{s\in[0,|y|]}\Big(\rho_n\big(sr_j/(2\Lambda)\big)\Big),
\end{equation*}
from which it follows that the inequality in the strict form is fulfilled.
\end{proof}

\section{Conclusion}

The paper describes the applicability criterion for the cutoff regularization in the coordinate representation in the Euclidean space with dimensions larger than two. Examples of deformations satisfying the criterion are constructed. In addition, the condition has been analyzed in the stronger formulation.

The text purposefully omits the case with dimension $n=2$. This is due to the fact that the logarithmic singularity of the form $\ln(\Lambda/\sigma)$ can participate in the deformation, so the ansatz can contain several terms. Such case is supposed to be studied in a separate paper with relevant examples. The application of regularization with $n=2$ can be found in \cite{Ivanov-Akac}. As an example, it can be noted that in the case of choosing a function in the form
\begin{equation*}
	G_2^{\Lambda,\mathbf{f}}(x)=\frac{1}{4\pi}\mathbf{f}\big(|x|^2\Lambda^2\big)+
	G_2^{\Lambda,\mathbf{0}}(x)
\end{equation*}
with the condition $\hat{G}_2^{\Lambda,\mathbf{0}}(y)=\rho_2\big(|y|/\Lambda\big)/|y|^2$, the results of Lemmas \ref{31-le2} (except for formula \eqref{31-12}) and \ref{31-le4} and formula \eqref{31-28} remain valid for $n=2$.

\vspace{2mm}
\textbf{Acknowledgements.} The work is supported by the Ministry of Science and Higher Education of the Russian Federation, grant 075-15-2022-289, and by the Foundation for the Advancement of Theoretical Physics and Mathematics "BASIS", grant "Young Russian Mathematics".

\vspace{2mm}
The author expresses gratitude to N.V.Kharuk for a careful reading of the text, numerous comments, criticism and suggestions. Additionally, A.V.Ivanov expresses special gratefulness to N.V.Kharuk and K.A.Ivanov for creating comfortable and stimulating conditions for writing the work.

\vspace{2mm}
\textbf{Data availability statement.} Data sharing not applicable to this article as no datasets were generated or analysed during the current study.

\vspace{2mm}
\textbf{Conflict of interest statement.} The author states that there is no conflict of interest.

\end{document}